\documentclass[11pt]{amsart}
\usepackage{
  algpseudocode,
  algorithm,
  amsmath,
  amsfonts,
  amssymb,
  epsfig,
  epstopdf,
  geometry,
  graphicx,
  xfrac,
  latexsym,
  setspace,
  url,
  hyperref,
  epstopdf
  }
\geometry{letterpaper}                   
\DeclareGraphicsRule{.tif}{png}{.png}{`convert #1 `dirname #1`/`basename #1 .tif`.png}

\newtheorem{theorem}{Theorem}

\newtheorem{example}[theorem]{Example}

\title{On a Stackelberg Subset Sum Game}
\author{
   Ulrich Pferschy,
   Gaia Nicosia, 
   Andrea Pacifici
   }
\date{}                                           

\def\Re{\mathbb{R}}

\def\np{$\mathcal{NP}$}
\def\ll{$\mathcal L$}
\def\ff{$\mathcal F$}
\def\tw{\tilde w}
\def\eps{\varepsilon}

\thanks{U. Pferschy: University of Graz, Austria. Email:
   \href{mailto:pferschy@uni-graz.at} {\texttt{
   pferschy@uni-graz.at}}}
\thanks{G. Nicosia: Universit\`a Roma Tre, Italy. Email:
   \href{mailto:nicosia@ing.uniroma3.it} {\texttt{
     nicosia@ing.uniroma3.it}}}
\thanks{A. Pacifici: Universit\`a di Roma Tor Vergata, Italy. Email:
   \href{mailto:andrea.pacifici@uniroma2.it} {\texttt{
   andrea.pacifici@uniroma2.it}}}

\begin{document}
\maketitle

\begin{abstract}
This contribution deals with a two-level 
discrete decision problem, 
a so-called Stackelberg strategic  game: A Subset Sum setting is addressed 
with a set $N$ of items with given integer weights. 
One distinguished player, the leader \ll, may alter the weights of the items 
in a given subset $L\subset N$, and a second player, the follower \ff, selects 
a solution $A\subseteq N$ in order to utilize a bounded resource in the best 
possible way. Finally, the leader receives a payoff from those items of its 
subset $L$ that were included in the overall solution $A$, chosen by the 
follower. 
We assume that \ff{} applies a publicly known, simple, heuristic algorithm 
to determine its solution set, which avoids having to solve \np-hard problems.

Two variants of the problem are considered, depending on whether \ll{} is 
able to control (i.e., change) the weights of its items $(i)$ in the objective 
function or $(ii)$ in the bounded resource constraint. 
The leader's objective is the maximization of the overall weight reduction, 
for the first variant, or the maximization of the weight increase for the 
latter one. In both variants there is a trade-off 
for each item between the contribution value to the leader's objective 
and the chance of being included in the follower's solution set.

We analyze the leader's pricing problem for a natural greedy strategy of 
the follower and discuss the complexity of the corresponding 
problems. We show that setting the optimal weight values for \ll{} is, in 
general, \np-hard. It is even \np-hard to provide a solution within a 
constant factor of the best possible solution. Exact algorithms, based on 
dynamic programming and running in pseudopolynomial time, are provided. 
The additional cases, in which \ff{} faces a continuous (linear relaxation) 
version of the above problems, are shown to be straightforward to solve.
\end{abstract}

\section{Introduction}

A special class of decision problems in which two competing agents act sequentially 
are the so-called {\em Stackelberg  Games}, a name 
that has its origins in the early 1930s \cite{Stackelberg}. 
In particular one agent, the \textit{leader} \ll, needs to optimally choose the values 
of some variables that a second agent, the \textit{follower} \ff, uses as parameters 
of an optimization problem and the leader's payoff depends on the solution 
determined by the follower. These games can be formulated using bilevel 
programming  (BP).
In a bilevel optimization problem 
some of the constraints specify that a subset of variables must be an optimal 
solution to another optimization problem. This paradigm is particularly 
appropriate to model Stackelberg games.
From a computational point of view, bilevel programming is a hard task.
Jeroslow \cite{bib:j85} showed that even in presence of linear objective 
functions and constraints, BP is already \np-hard. The strong hardness of the same
problem was later proved by Hansen \textit{et al.} \cite{bib:hjs92}. Further
complexity results are presented in, e.g., \cite{bib:lv16} where the authors state
that most solution techniques for BP have been developed focusing on special cases 
in which convenient properties, such as linearity or convexity, can be exploited 
in order to develop efficient solution methods.

In the classical Subset Sum Problem (SSP), we are given a set of $n$ items 
$N=\{1,2, \dots ,n\}$, each having a positive integer weight $w_i$, $i=1, \dots , n$,  
and a knapsack of capacity $c$. The problem is to select a subset 
$A\subseteq N$ such that the corresponding total weight is closest to $c$ without 
exceeding $c$. 
SSP is a well known and widely studied problem and can be considered as a special 
case of the classical binary knapsack problem (KP) arising when the profit and 
the weight associated with each item are identical. Although SSP is a special case 
of KP, it is still {\np}-hard \cite{bib:gj79}, but can be solved to optimality in 
pseudopolynomial time and admits a FPTAS (see \cite{kpp04} for a review on solution 
algorithms for SSP). 
SSP has also been studied in the context of game theory  
\cite{bib:cms2015,bib:dnps14,bib:fj2013}, social choice, and multi-decider systems 
\cite{bib:npp2017,bib:sagt2015}.

%
%




In this paper, we consider a special Stackelberg game arising in a 
SSP setting: 
The leader \ll{} may alter 
the weights of some items (the leader's items $L\subseteq N$), while the 
follower \ff{} selects a solution set $A\subseteq N$ in order to 
utilize a bounded resource in the best possible way.  
In order to do so, \ff{} applies a simple, possibly suboptimal, strategy
which is known to \ll{}.
The leader receives a payoff only from its own items that are included in 
the solution, i.e., from items in $A\cap L$.
We address two variants of the problem: 
In a first version the leader  
may control, i.e., change, the weights of its items in the objective function 
(but not in the knapsack constraint) and wants to maximize the overall reduction
in the weights of its items included in the solution set as much as possible.
A second variant considers pricing items in the constraint only and 
not in the objective. 
In this case, the leader aims at maximizing the total weight increase
of its solution items.
In both variants there is a trade-off 
for each item between the contribution value to the leader's objective 
and the chances of being included in the follower's solution set.

Our study can be motivated by the application scenarios we sketch hereafter.
A financial trader \ll{} (e.g., a bank) offers a number of products (i.e., investments opportunities) to a client \ff{} who also has access to alternative forms of speculations but limited knowledge, bounded computational capacity, and restricted budget availability. 
All these products, either offered by the trader or not, are characterized 
by a given cost and ROI which are determined by the market and
the client selects a portfolio using simple strategies 
based on the efficiency of the financial products at hand. 
Due to client's limited information, the trader may offer its own products 
at a reduced ROI, taking the difference as a profit.
On the other hand, the trader has to make its investments attractive in order to 
increase chances that client purchases these instead of other products. 
In conclusion, the trader has to decide, for all its financial products, 
the (decreased) ROI offered to the client in order to maximize the overall profit 
which is in turn given by the overall discount on the actual ROI values associated 
to the products the trader has sold. 
In a slightly different setting, the trader tries to sell
its products at prices larger than the actual costs, while maintaining their real 
market profitability valid for the clients. In this case, the trader's 
profit is associated to the difference between the price set for those items included in the solution set and their genuine costs.
The model presented in this work represents the special case of a market 
scenario in which ROI is proportional to the cost of an investment.

A different example, motivating our model in a production planning context, considers 
an agent \ff{} that controls a processing resource with limited capacity 
(e.g., total available processing time) $c$. 
The agent performs its own jobs on the machine but also accepts external orders 
by a different agent \ll{} in order to 
improve the utilization of its resource. \ff{} wants to decide on a production 
plan satisfying the capacity constraint while maximizing
its utility which is 
directly proportional to the overall resource consumption.
In a first setting 
the leader wants to have some of its jobs processed by 
\ff{} but could also have the jobs processed at a different workshop, where it has 
to pay the same proportional price for each unit of processing time. \ll{} 
might gain something if it manages to have some of its jobs processed by \ff{} at a 
cheaper price. Even if less profitable, it may be beneficial for \ff{} to devote 
a fraction of---otherwise unused---resource capacity for processing the external 
jobs. In this case \ll{} wants to set prices of its jobs to maximize the total 
saving obtained by having its jobs processed by \ff{} at a cheaper cost than another
workshop.
An alternative situation 
arises if there is no possibility for \ll{} to negotiate prices, but \ff{} may be 
willing to let the leader use the machine for a bit longer than the original 
processing time of each job. Then the leader tries to extend this total extra time as much 
as possible (and use this time to do some additional work on these machines 
for free). 

\def\O{\mathcal{O}}

The paper is organized as follows. 
In the following Section~\ref{sec:problem} the addressed problem is rigorously defined.
In Section \ref{sec:objectivemodel} we consider the first variant of the problem 
where \ll{} is pricing items in the objective
and propose an algorithm that allows the leader to find the solution in pseudopolynomial time. 
We also show that, even though the problem admits a solution algorithm 
running in $\O(n c^2)$, it is not possible to obtain an  algorithm guaranteeing 
a constant as approximation ratio (unless $\mathcal{P}=\mathcal{NP}$).
Section \ref{sec:constraintmodel} addresses the second version of the problem 
where items are priced in the constraint. Even in this case  
constant-ratio approximation algorithms are not possible, however it is
possible to find an optimal pricing for \ll{} in a shorter time $\O(n^{\sfrac 3 2}c)$.
Oddly enough, a simple variant (quite similar at a first glance) of the latter problem
can be easily solved as discussed in Section~\ref{sec:variantconstraintmodel}.
Section \ref{sec:LPrelax} briefly sketches the straightforward solutions for the case in which the follower is allowed to include fractions of items in the solution set, i.e., \ff{} solves the LP relaxation for the above problems. 
Finally, in Section \ref{sec:concl}, some conclusions are drawn.

\subsection{Problem Definition}\label{sec:problem}

For ease of presentation, we partition the items into two classes: 
The set $L$ of items controlled by the leader \ll{} and the set $F = N\setminus L$ 
of the remaining items. For simplicity we also refer to $L$ and $F$ as the leader's 
items, or \ll-items, and follower's items or \ff-items.  
%
The original SSP posed to the agents can be written as an integer program: 
%
\begin{equation}\label{eq:originalPb}
    \max \left\{\sum_{j\in N}w_j y_j\ :\ \sum_{j\in N} w_j y_j \le c\,;\ y\in\{0,1\}^n \right\}
\end{equation}
As discussed above, the leader \ll{} is interested in filling the knapsack 
with its items (i.e., items $j\in L$) and, to this purpose, it may alter the 
parameters of Problem~\eqref{eq:originalPb} above in two different ways, 
by changing its item weights --as perceived by the follower-- either in the 
objective function or, alternatively, in the capacity constraint. Note that 
in this way we allow the item data to be separated and different in the 
objective and in the constraint.
Hereafter, we refer to the original and revised values of the leader's item $j$ 
by $w_j$ and $\tw_j$, respectively 
and limit our analysis to
nonnegative weights $\tw_j\geq 0.$

In this paper we consider the following two models:

\smallskip\noindent
{\em 1. Objective-control}: \ll{} may change the weights of its items {\em in the 
    objective} of Problem~\eqref{eq:originalPb}. Then \ff{} is confronted with the 
    following KP 
    \begin{align}\label{eq:RevisedObjPb}
    \begin{aligned}
    &    \max \left\{\sum_{j\in L} \tw_j y_j +\sum_{i\in F} w_i y_i\ :\ \sum_{j\in N} w_j y_j \le c\,;\ y\in\{0,1\}^n \right\}. &  
    \end{aligned}
    \end{align}
    Let $y^*$ be an optimal solution of~\eqref{eq:RevisedObjPb}. 
For each item $j\in L$, the leader \ll{} would like  to gain some of the value of $j$
by reducing the item profit (i.e., setting $\tw_j < w_j$) for the follower.
The resulting leader decision problem may be expressed via the bilevel program below: 
\begin{align}\label{eq:RevisedObjPbLeader}
\begin{aligned}
\max & \sum_{j\in L} (w_j - \tw_j) x_j  \\
\mbox{s.t. }  & x \in \arg\max_{{y \in \{0,1\}^{n}}} 
 \left \{ \sum_{j\in L} \tw_j y_j  + \sum_{i \in F} w_i y_{i} \ :  
 \sum_{j\in N} w_j y_j  \leq c \right\} \\
  & x \in \{0,1\}^{n};\ \tw\in\Re_+^{|L|}  
\end{aligned}
\end{align}
where $x$ and $\tw$ are the variables of the leader's 
problem: $x$ is restricted to be an optimal solution of  the follower's problem, 
i.e., $x$ components take the optimal values of the $y$ variables in the KP
faced by \ff. In particular, the binary variables $y_j$, in the follower's problem, 
indicate whether item $j$ is or is not included in the knapsack. Continuous variables 
$\tw_j$ indicate the chosen \ll-item weights. 

%
\smallskip\noindent 
{\em 2. Constraint-control}: \ll{} may change the weights of its items 
{\em in the constraint} of Problem~\eqref{eq:originalPb}. Then \ff{} is 
confronted with the following KP:
\begin{align}\label{eq:RevisedConstrPb}
\begin{aligned}
& \max \left\{\sum_{j\in N} w_j y_j\ :\ 
\sum_{j\in L} \tw_j y_j + \sum_{i\in F} w_i y_i \le c\,;\ y\in\{0,1\}^n \right\},&  
\end{aligned}
\end{align}
while \ll{} wants to maximize the weight increase of its items relative to their 
original values, in the optimal solution set of \ff{}. The corresponding
leader decision problem is therefore:
\begin{align}\label{eq:RevisedConstrPbLeader}
\begin{aligned}
\max & \sum_{j\in L} (\tw_j -w_j) x_j 
\\
\mbox{s.t. }  & x \in \arg\max_{{y \in \{0,1\}^{n}}} 
 \left \{\sum_{j\in N} w_j y_j\ :\ 
 \sum_{j\in L} \tw_j y_j  + \sum_{i \in F} w_i y_{i} \leq c \right\} \\
  & x \in \{0,1\}^{n};\ \tw\in\Re_+^{|L|}. &     
\end{aligned}
\end{align}
Here, increased $\tw_j$ values correspond to a higher utility 
for \ll{}. At the same time however, this lowers the chance that $j\in L$ is selected by 
the follower and thus becomes relevant for \ll. 

\smallskip
Note that, in both cases, the problem faced by the follower is a binary 
knapsack problem. KP is a (binary) \np-hard optimization problem, however 
we will assume that the follower has limited computational capability
as it was done in~\cite{bib:bhgv12}.
As a consequence, 
instead of finding the optimal solution, 
\ff{} determines a feasible solution set of the resulting knapsack problem 
by applying the very natural and intuitively appealing {\sc Greedy} 
strategy.
This holds for both versions of our problem, namely 
\eqref{eq:RevisedObjPbLeader} and \eqref{eq:RevisedConstrPbLeader}.
 
In KP the {\em efficiency} $e_j$, measured as the ratio between the 
profit and the weight of an item $j$, 
indicates how well $j$ utilizes the capacity in a solution set. 
{\sc Greedy} follows the natural idea of sorting the items in non-increasing order of 
efficiencies and then adding the items to the solution set in that order.
Whenever an item exceeds the current residual capacity it is discarded.
If an item can be added to the current solution it is inserted into the 
knapsack and never removed again. 
For the decision of \ff{} on a leader's item $j\in L$, in the first variant, 
i.e., Objective-control, we have 
$e_j = \sfrac {\tw_j} {w_j}$; while for the Constraint-control 
variant $e_j = \sfrac {w_j} {\tw_j}$. On the other hand, since all items $i \in F$ have the same efficiency $e_i = 1$,
it makes sense for a subset sum setting to apply a tie-breaking rule where 
the follower selects items of equal efficiency in decreasing order of weights.
With reference to the mechanism implemented by {\sc Greedy}, we 
write that an item $j$ is positioned {\em before} $i$ if $e_j > e_i$ (or,
{\em after} $i$ if $e_j < e_i$.)

%

In the next two sections, we consider the  integer problem and 
derive solution strategies for the leader.
In Section~\ref{sec:LPrelax}, 
we briefly consider a continuous model, in which \textit{fractions} of items are
allowed to be included  in the knapsack, and show that both versions of the 
problem are easy to solve in this case.

\section{Objective-control model}\label{sec:objectivemodel}


In general, \ll{} has the following options for each of its items:
$(ii)$ Either $j\in L$ is positioned before all {\ff}-items, i.e.\ with 
efficiency slightly larger than $1$ by setting $\tw_j := w_j + \eps$ and thus 
incurring a marginal loss for \ll, if the capacity permits it to be packed; or
$(ii)$ an item $j$ of $L$ is positioned after all {\ff}-items. In this case, 
the sorting by efficiencies serves only to implement a certain order of the 
remaining {\ll}-items offered to the {\sc Greedy} algorithm, which can also 
be reached by marginally small values of $\tw_j$. Thus, \ll\ gains (almost) 
$w_j$ in its objective for every packed item $j$. $(iii)$ A third option is to 
set $\tw_j := w_j$ thus inter-mixing leader and follower items. This choice 
cannot increase the range of possibilities for {\ll} or improve its objective either.
Hence, one may avoid considering strategies for leader that leave unaltered
the efficiencies of its items. As a consequence, an optimal 
solution can be partitioned into three (possibly empty) sets of items
selected by {\sc Greedy}: A first set of \ll-items, one block of
\ff-items, and a third set of \ll-items. 

Based on the above considerations, the optimal strategy for \ll\ 
is characterized by the total weight $W_1$ of all items placed before the {\ff}-items.
Recall that {\sc Greedy} selects the \ff-items---all with equal 
unit efficiency---in non-increasing order of weights and simply 
packs whatever items fit: let $F(W_1)$ be the total weight of 
these included items.  Now the residual capacity $\bar c = c-W_1-F(W_1)$ is 
available for packing with the remaining {\ll}-items.
The total weight of \ll-items packed in this step is denoted by $W_2 \leq \bar c$ 
and constitutes the objective value of \ll.
It is important to point out that, given the residual capacity $\bar c$, 
the leader may actually choose the set $S$ of its items that will be packed by
{\sc Greedy}: This can be done by slightly increasing the efficiency
of such items e.g., setting $\tw_j := \eps$ for each $j\in S$.  

Since the algorithm tries to pack each {\ff}-item, the residual capacity $\bar c$ 
remaining for \ll{} is upper bounded (unless the follower is left with no items 
at all) by the weight of the \emph{smallest unpacked} item of \ff{}.
In order to maximize the residual capacity for its own items, it would be natural to think that \ll{} would try to reach a capacity $\bar c$ strictly smaller than $w^F_{\min}$ for {\sc Greedy}.
However the following example
shows that, if \ff{} plays {\sc Greedy}, \ll{} may obtain better than that.
\begin{example}\label{ex:StructSol}
{\small
Consider an instance of our problem with $c=20$ and 8 items (sorted in non-increasing weight order) with  
$L = \{4, 5, 6, 8\}$ and $F = \{1, 2, 3, 7\}$. The item weights are 
$\{w_4 = 9, w_5 = 8, w_6 = 5, w_8 = 3\}$ for the \ll-items and  
$\{w_1 = 12, w_2 = 11, w_3 = 10, w_7 = 4\}$ for the items of {\ff}.

In Figure~\ref{fig:structsol}(a), a solution is depicted in which 
by sequencing $\langle w_4 = 9, w_5 = 8 \rangle$ we have 
$\bar c = 3$ which is the largest value smaller than $w^F_{\min} = 4$. \ll{} may obtain this 
solution by setting $\tw_4 = w_4 + \eps$ and $\tw_5 = w_5 + \eps$  (with cost $+2\eps$) 
so that $W_1 = 9+8 = 17$, $F(W_1) = 0$, and
$\bar c = 3$ that eventually \ff{} will pack with weight $w_8 = 3$ and \ll{} sets $\tw_8 = 0$. 
The leader objective value is $3-2\eps$. 
However, 
as illustrated in Figure~\ref{fig:structsol}(b), 
\ll{} may improve its objective by 
including items $w_5 = 8$ and $w_8 = 3$ with $W_1 = 11$, then \ff{} may only include $w_7 =4$ so, 
eventually, $\bar c = 5$ can be exploited by packing $w_6 = 5$ and the final value for the 
leader objective is $5-2\eps$.
\begin{figure}
    \centering
    \includegraphics[width=0.8\textwidth]{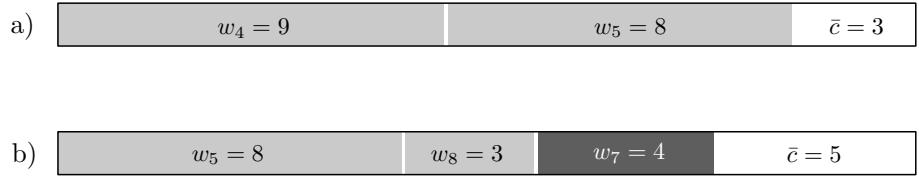}
    \caption{Structure of the solution sets with {\sc Greedy}}
    \label{fig:structsol}
\end{figure}
}\end{example}

From an algorithmic point of view\footnote{With no loss of generality, 
in the computations we neglect the $\eps$ values, needed  to ``guide'' {\sc Greedy} in 
the selection of the \ll-items.}, given $W_1$, the value $F(W_1)$ and thus 
$\bar c$ immediately follow and it remains for \ll{} to solve an instance 
of a standard SSP with capacity $\bar c$ and the item set without those 
items included in $W_1$. 
In order to do so, we have to consider all possible candidate values $W_1$ and
all weight values $W_2$ reachable after fixing $W_1$.
By  running a {\em dynamic programming by reaching} algorithm \cite{kpp04} and going 
through all $W_1$ values,the final optimal solution can be found easily 
in pseudopolynomial time.

\begin{theorem}\label{th:objectivepseudopol}
In the Objective-control Leader's problem \eqref{eq:RevisedObjPbLeader}, 
the optimal values $\tw_j$, $j\in L$ for \ll{} can be determined in time 
$O(n c^2)$. 
\end{theorem}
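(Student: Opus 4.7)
The plan rests entirely on the structural observation preceding the theorem. An optimal leader strategy is determined by a disjoint partition of $L$ into three sets $S_1$, $S_2$, $L\setminus(S_1\cup S_2)$ and by the resulting weight $W_1=\sum_{j\in S_1}w_j$. The first set is pushed in front of the \ff-items by setting $\tw_j=w_j+\eps$, after which {\sc Greedy} packs a total weight $F(W_1)$ of \ff-items into the residual capacity $c-W_1$; then the second set $S_2$ is packed by giving its items tiny positive $\tw$-values, yielding the leader's payoff $W_2=\sum_{j\in S_2}w_j$, which must satisfy $W_2\le \bar c=c-W_1-F(W_1)$. Hence the algorithmic task reduces to enumerating candidate weights $W_1$ and, for each, determining the largest achievable $W_2$.

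First I would precompute $F(W_1)$ for every $W_1\in\{0,1,\dots,c\}$. Sorting the \ff-items once in non-increasing weight order and simulating {\sc Greedy} on each capacity $c-W_1$ accomplishes this in $\O(|F|c)$ time overall.

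Next, I would build a three-dimensional dynamic program on $L$. Define $D[k][a][b]$ to be true iff among the first $k$ items of $L$ there exist two disjoint subsets with total weights exactly $a$ and $b$. The recurrence for the next item $k+1$ has three branches---skip it, place it in $S_1$ (reducing the $a$-coordinate by $w_{k+1}$), or place it in $S_2$ (reducing the $b$-coordinate by $w_{k+1}$)---with the boundary condition $D[0][0][0]=\text{true}$. The table has $\O(|L|c^2)$ cells, each filled in constant time, yielding $\O(|L|c^2)$ total, and back-pointers allow reconstruction of the witnessing pair $(S_1,S_2)$.

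Finally, the optimum is
\[
W_2^\star \;=\; \max\left\{\, b \;:\; D[|L|][a][b]=\text{true},\ b\le c-a-F(a) \,\right\},
\]
and the revised weights are fixed exactly as in the structural argument: $\tw_j=w_j+\eps$ for $j\in S_1$; small positive values for $j\in S_2$ chosen so that every $S_2$-item has strictly larger efficiency than every remaining leader item; and arbitrarily small values for the rest. The overall running time is $\O(|F|c+|L|c^2)=\O(nc^2)$. The one point requiring care beyond routine DP bookkeeping is ensuring that no leftover \ll-item is accidentally picked up by {\sc Greedy}: this holds because, by the maximality of $W_2$ for the chosen $W_1$, every unselected \ll-item has weight exceeding $\bar c-W_2$, and hence is rejected once its efficiency is set below that of the items in $S_2$.
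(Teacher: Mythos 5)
Your proposal is correct and follows essentially the same route as the paper: a reachability dynamic program over pairs of disjoint subset sums of $L$ (your $D[k][a][b]$ is the paper's array $r(W_1,W_2)$ with an explicit item index), combined with computing $F(W_1)$ by {\sc Greedy} and maximizing $W_2\le \bar c$ over all reachable pairs, for the same $O(nc^2)$ bound. The only differences are cosmetic (precomputing $F$ for all capacities and the closing remark about leftover \ll-items, which in any case could only increase the leader's payoff).
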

\begin{proof} In order to 
consider all possible candidate values $W_1$ and
we run a dynamic programming by reaching algorithm.

More formally, we define a two-dimensional array of reachable weight pairs
$r(W_1, W_2)$ for $W_1, W_2 \in \{0,\ldots, c\}$, where
$r(W_1, W_2)=1$ iff there exist two disjoint subsets 
$S_1, S_2 \subseteq L$ with $\sum_{i\in S_1} w_i =W_1$ and $\sum_{i\in S_2} w_i =W_2$,
and $r(W_1, W_2)=0$ otherwise.

The entries of this array can be found by running a {\em dynamic programming by reaching} algorithm as follows:
As an initialization we set all entries of $r$ to $0$ except $r(0,0)=1$.
Then we consider all items of \ll\ in turn (in arbitrary order).
For each item $j \in L$ we consider all pairs $(w', w'')$
with $w', w'' \in \{0,\ldots, c\}$:
If $r(w',w'')=1$, then we set $r(w'+w_j, w'')=1$ and $r(w', w''+w_j)=1$,
corresponding to the possibility of adding a weight $w_j$ before or after the {\ff}-items.

After $r$ is fully determined we go through all feasible choices of $W_1$,
i.e.\ all values $W_1$ where $\sum_{w''=0}^c r(W_1, w'') \geq 1$.
For each such candidate $W_1$ the value $F(W_1)$ is determined by executing {\sc Greedy} with item set $F$ and the value $\bar c =c-W_1-F(W_1)$ is computed.
Then, the best solution (given $W_1$) is determined as
$$W_2(W_1) :=\max \left\{ W_2 \mid W_2 \leq \bar c \mbox{ and } r(W_1,W_2)=1\right\}.$$
Finally, it remains to pick the optimal solution for \ll{} by taking the best choice for $W_1$ as
$$\max \left\{W_2(W_1) \mid \sum_{w''=0}^c r(W_1, w'') \geq 1\right\}.$$
The array $r$ has a size of $c^2$. Its computation requires each item in $L$ to be considered once for every array entry which gives a trivial pseudopolynomial running time bound of $O(n c^2)$. \end{proof}

\def\nnp{\mathcal{NP}}
Although an algorithm with pseudopolynomial running time exists, 
there is in fact no hope
to find an efficient algorithm that at least guarantees a constant 
approximation ratio, as shown by the following theorem.

\begin{theorem}\label{th:objectiveNoAPX} 
Unless $\mathcal{P}=\nnp$, the Objective-control Leader's 
Problem \eqref{eq:RevisedObjPbLeader} does not admit a constant 
approximation ratio.
\end{theorem}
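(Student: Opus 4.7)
The plan is to establish inapproximability via a gap reduction from the \np-hard Partition problem. Given an instance $a_1,\ldots,a_m$ with $\sum_{i=1}^m a_i=2T$, I would construct a leader instance whose optimum equals a parameter $M$ (which can be chosen as large as desired) when Partition has a solution, and is at most $T-1$ otherwise. Any purported $\alpha$-approximation algorithm could then be used to distinguish YES- from NO-instances by taking $M>\alpha(T-1)$, implying $\mathcal{P}=\nnp$.

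The construction endows $L$ with probe items $p_1,\ldots,p_m$ of weights $w_{p_i}=a_i$ together with one reward item $r$ of weight $w_r=M$, while $F$ contains two items of weights $w_{f_1}=3M+T+1$ and $w_{f_2}=2M+T$ (so $f_1>f_2$, meaning {\sc Greedy} considers $f_1$ first among the unit-efficiency items); the capacity is $c=3M+2T$. Invoking the structural reduction used in Theorem~\ref{th:objectivepseudopol}, the leader's decision reduces to choosing a subset $B\subseteq L$ placed before the $F$-items, with total weight $W_1=\sum_{j\in B}w_j$. {\sc Greedy} then processes $f_1,f_2$ and leaves a residual capacity $\bar c$, within which the leader is free to pack any subset $S\subseteq L\setminus B$ with $\sum_{j\in S}w_j\leq\bar c$; the objective is essentially $\sum_{j\in S}w_j$.

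The crux is a case analysis over $W_1$, first assuming $r\notin B$ so that $W_1$ is a subset sum of the $a_i$'s. The arithmetic is designed so that: if $W_1\leq T-1$, then $f_1$ fits and leaves $\bar c=T-1-W_1\leq T-1$, too small for $r$; if $W_1=T$, then $f_1$ is skipped (as $W_1+w_{f_1}=3M+2T+1>c$), $f_2$ fits, and $\bar c=M$ exactly, so $r$ can be packed for an objective of $M$; if $W_1\geq T+1$, then $f_1$ is skipped, $f_2$ still fits, $\bar c=M-(W_1-T)<M$ excludes $r$, and the remaining probes in $L\setminus B$ sum to $2T-W_1\leq T-1$. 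A parallel analysis handles $r\in B$ (forcing $W_1\geq M$): either $f_2$ fits giving $\bar c=T-X$ where $X$ denotes the probe mass inside $B$, in which case packing any $S\subseteq L\setminus B$ with $\sum_{j\in S}w_j=T-X$ would reconstruct a Partition solution and is hence impossible in the NO case; or $f_2$ is skipped, leaving only probes summing to $2T-X<T$. In every NO sub-case the objective is at most $T-1$, while any Partition solution $S$ lets the leader choose $B=\{p_i:i\in S\}$ and attain value $M$.

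The main obstacle I anticipate is the exhaustiveness of the case analysis: care is needed to verify that no boundary configuration---particularly $r\in B$ combined with probe subsets overshooting $T$---quietly inflates the NO-case value beyond $T-1$, and to track the {\sc Greedy} behaviour on the residuals down to arithmetic. Once this is settled, taking $M$ polynomially large relative to $T$ (for instance $M=\alpha T+1$ for the contested constant $\alpha$) yields an unbounded multiplicative gap between YES and NO instances and precludes constant-ratio approximation unless $\mathcal{P}=\nnp$.
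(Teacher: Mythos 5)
Your proposal is correct and follows essentially the same strategy as the paper: a gap reduction from {\sc Partition} in which a large leader item of weight $M$ can be ``unlocked'' (yielding optimum $M$) exactly when a subset summing to $T$ exists, while every NO-instance configuration yields at most $T-1$, so choosing $M$ larger than $\alpha T$ kills any constant ratio $\alpha$. The paper's gadget is a bit leaner (a single follower item of weight $M+1$ and capacity $M+b$, so the exact subset sum is used to block that item), whereas you use two follower items to steer {\sc Greedy}, but the underlying argument and the YES/NO gap structure are the same, and your case analysis checks out.
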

\begin{proof}
Hereafter, we show that it is \np-complete to determine  weights $\tw_j$ 
in order to  obtain  a solution which is within a constant factor of the best possible solution for \ll .
Consider an instance $I$ of the decision problem {\sc Partition} \cite{bib:gj79}, 
where $m$ integer numbers $a_1, \ldots, a_m$ are given
and the question is whether there exists a subset of these 
numbers with total sum equal to 
$b = \frac 1 2 \sum_{i=1}^m a_i$.

Starting from $I$, we define the following instance $I'$ of the Objective-pricing Leader's Problem, 
where $M$ is a large enough number: 
The leader \ll{} has $m+1$ items with profits corresponding to the values of instance $I$, 
i.e.\ $w_j=a_j$, and one additional item with weight $w_{m+1}=M$.
The follower \ff{} has only one item of weight $w_{m+2}=M+1$.
The knapsack capacity is set to $c=M+b$.

As observed above, \ll{} can only place its items before the \ff-item, 
with a marginal positive cost in the objective, or 
after the \ff-item, with a utility equal to their full weight. 

If $I$ is a YES-instance of {\sc Partition}, then there exists a subset $S$ with $\sum_{i \in S} a_i= b$.
The optimal strategy of \ll{} will submit this set $S$ first, before the {\ff}-item of weight $M+1$.
Now the {\ff}-item $m+2$ is blocked since $b+M+1 > c$.
However, the \ll-item of weight $w_{m+1}$ can still be packed 
and, by setting $\tw_{m+1} := 0$, \ll{} gets a ``gain'' equal to $M$.

Otherwise, if $I$ is a NO-instance, there are two cases to consider:
\begin{enumerate}
\item \ll{} lets \ff{} packs its item:
Hence, the optimal strategy of \ll{} will not submit any items before $w_{m+2}$ thus setting $W=0$ and gaining at most $b-1$.

\item \ll{} blocks the \ff-item by submitting, before $m+2$,  a subset of items  with total weight $W\geq b+1$.
Then the follower item ${m+2}$ is blocked but also ${m+1}$ of \ll{} with weight $w_{m+1} = M$ does not fit anymore. 
However, $L$ can submit all its remaining items and gets them packed by {\sc Greedy}. 
This results in a gain of at most $2b-W\leq b-1$.
\end{enumerate}
Summarizing, the instance $I$ of {\sc Partition} has a YES answer if and only if
the optimal weight obtained by the leader in the Stackelberg subset sum problem is 
equal to $M$ while it is less than $b$ for any NO answer.
This rules out a polynomial time approximation algorithm with any constant 
approximation ratio $\rho$, since we can choose, e.g., $M >  \rho\, b$. \end{proof}

\section{Constraint-control model}\label{sec:constraintmodel}

We now turn to the Constraint-control model  illustrated by
Problem~\eqref{eq:RevisedConstrPbLeader}.
 %
In this case the coefficients in the objective function for 
\ff{} are fixed and equal to the original given weights $w_j$. 
Instead, the leader can modify the weights of the {\ll}-items in 
the constraint. 
%

It is not hard to see that, for this version of the problem, the solution 
structure is somehow similar to that of Program \eqref{eq:RevisedObjPbLeader}, 
however this variant is slightly less complex to solve.  
As before, recalling that here the efficiency of an item $j$ in the 
follower's problem is $e_j = \sfrac{w_j}{\tw_j}$, the leader can 
position any item $j \in L$ before the {\ff}-items 
by accepting a marginal loss, i.e., setting $\tw_j := w_j - \eps <0 $.
Denote the total weight of these items again by $W_1\leq c$.
Then, \ff{} will include some of its items following {\sc Greedy}
with total weight $F(W_1)$.
Finally, when \ff{} has no more items to include, the leader may set 
its objective value using the residual capacity $\bar c = c-W_1-F(W_1)$.
It suffices to pick the smallest remaining {\ll}-item of weight $w'$ 
(if it fits) and increase its weight up to $c-W_1-F(W_1)$ 
so that the capacity is fully consumed. All the remaining \ll-item weights should  
be increased to a large enough value (e.g., $\bar c + 1$) in order to guarantee 
that {\sc Greedy} actually picks the desired item $w'$.

A straightforward algorithm for determining the optimal solution for \ll{} 
would consider each item $w' \in L$ as a candidate for the smallest remaining item.
For each of these $|L|$ choices, one can run a simple one-dimensional
{\em dynamic programming by reaching} 
to determine all weight values $W_1$ which can be reached by a subset of $L \setminus \{w'\}$.
By a suitable implementation of the dynamic programming algorithm, we may state:

\begin{theorem}\label{th:constraintpseudopol}
In the Constraint-control Leader's problem \eqref{eq:RevisedConstrPbLeader}, 
the optimal weights $\tw_j$ for $j\in L$ can be determined in pseudopolynomial 
time of $O(n^{\sfrac 3 2} c)$.
\end{theorem}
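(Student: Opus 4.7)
The plan is to make the enumeration described just before the theorem computationally efficient by means of a square-root decomposition over $L$. For each candidate $w' \in L$ playing the role of the residual filler (placed after the \ff-items with weight inflated to $\bar c = c - W_1 - F(W_1)$), I must determine the reachable subset sums $W_1$ of $L \setminus \{w'\}$ and, among these, pick the $W_1$ maximizing the leader gain $\bar c - w'$ subject to the feasibility constraint $w' \leq \bar c$. Taking the maximum over all $w'$ yields the optimum.

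As a preparation, I would precompute the value $F(W_1)$ for every integer $W_1 \in \{0,1,\ldots,c\}$ by a single sweep that simulates {\sc Greedy} on the \ff-items sorted by decreasing weight (the tie-breaking rule assumed throughout). This takes $O(n+c)$ time, is dominated by what follows, and makes each $F(W_1)$ (and hence $\bar c$) available in constant time.

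The naive method would run a fresh one-dimensional subset-sum reaching DP over $L \setminus \{w'\}$ for each candidate, costing $\O(|L|\,c)$ per candidate and $\O(n^2 c)$ in total. To shave a factor of $\sqrt{n}$, I would partition $L$ into $\lceil \sqrt{|L|}\,\rceil$ blocks $G_1, \ldots, G_{\lceil \sqrt{|L|}\,\rceil}$ of size $\O(\sqrt{|L|})$. For every block $G_k$, precompute the reachability vector $r_k \in \{0,1\}^{c+1}$ for subset sums of $L \setminus G_k$ in $\O(|L|\,c)$ time using the standard reaching recursion; summed over the $\O(\sqrt{|L|})$ blocks, the precomputation totals $\O(|L|^{\sfrac 3 2}\,c) = \O(n^{\sfrac 3 2} c)$.

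To answer the query for a candidate $w' \in G_k$, start from $r_k$ and fold in, one by one via standard $\O(c)$ DP updates, the $|G_k|-1 = \O(\sqrt{|L|})$ items of $G_k \setminus \{w'\}$. Each query therefore costs $\O(\sqrt{|L|}\,c)$, and summed over the $|L|$ candidates we again obtain $\O(n^{\sfrac 3 2} c)$. A final $\O(c)$ scan per candidate of the resulting reachability vector retrieves the best admissible $W_1$, and taking the maximum across all $w'$ gives the optimal leader strategy. The only subtlety, which I expect to be the most delicate point to justify, is calibrating the block size so that precomputation and query cost exactly balance, and ensuring that the tie-breaking inside {\sc Greedy} truly forces it to pick the designated filler $w'$; this last point is handled, as in the paragraph preceding the theorem, by inflating the weights of all other unused \ll-items to a value strictly greater than $\bar c$ so that they are discarded by {\sc Greedy} upon inspection.
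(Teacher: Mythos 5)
Your proposal is correct and follows essentially the same route as the paper's proof: a square-root decomposition of $L$ into $O(\sqrt{|L|})$ blocks, a precomputed reaching array for $L\setminus G_k$ per block, folding in the $O(\sqrt{|L|})$ remaining block items per candidate $w'$, plus a precomputed table of $F(W_1)$ for all capacities, yielding $O(n^{\sfrac 3 2} c)$ overall. The details (balancing block size, forcing {\sc Greedy} to take the designated filler by inflating the other \ll-item weights beyond $\bar c$) match the paper's argument.
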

\begin{proof}
For each possible candidate as smallest remaining item $w'\in L$,
we may run a dynamic programming by reaching algorithm in order to
determine all weight values $W_1$ that can be achieved by a subset
of $L\setminus\{w'\}$.
This can be done in $O(n c)$ time e.g.\ using a one-dimensional reduction 
similar to the algorithm devised for the objective-control model.
For every reachable value $W_1$, {\sc Greedy} can be applied for the 
\ff-items yielding $F(W_1)$ and requiring at most $O(n c)$ time (after 
sorting the \ff-items once). If $\bar c = c-W_1-F(W_1) > w'$, we keep
$\bar c - w'$ as a candidate for the objective function value.

In this way we consider every item in $L$ as a potential candidate for 
$w'$ and finally keep the best objective function value for the optimal 
solution. All together, this yields an $O(n^2 c)$ pseudopolynomial 
running time.

\smallskip
It is possible to improve the running time of the algorithm by considering 
a slightly more involved implementation of the dynamic programming algorithm 
which works as follows (we omit the details on the rounding procedure to obtain 
integer valued quantities when required): Given a parameter $k$ (to be defined 
later) the set $L$ is partitioned into $k$ subsets $L_i$, $i =1,\ldots, k$, of 
(roughly) equal size, i.e.\ $|L_i| \approx |L|/k$. Instead of running the dynamic 
programming iteration separately for each item $w' \in L$, we perform the computation 
in $k$ phases, considering all items $w' \in L_i$ in one phase.

In each such phase we first compute an initialization of the dynamic programming 
array taking all items in $L \setminus L_i$ into account which takes $O(n c)$ 
time. This array remains unchanged and is stored until the completion of the 
current phase. Then we pick one candidate $w' \in L_i$ and continue the computation 
from the above initialization by adding all items in $L_i \setminus \{w'\}$.
This takes $O(n/k \cdot c)$ time and gives all values $W_1$ for the chosen $w'$.
Iterating this process for all candidates in $L_i$ (going back to the precomputed 
initialization of the array for each candidate) adds a total running time of 
$O((n/k)(n/k) \cdot c)$ for this phase.

Iterating this approach over all sets $L_i$ produces the dynamic programming array 
for all candidates $w' \in L$ and takes $O(k (n\cdot c + n^2 / k^2 \cdot c))$ time. 
Choosing the parameter $k \approx \sqrt{|L|}$ we get an overall running time
of $O(n^{\sfrac 3 2} c)$.

It remains to consider the outcome of {\sc Greedy} and the resulting 
residual capacity $\bar c$, resp.\ the difference $\bar c -w'$, for 
each candidate $w'$. To do so, we can precompute the result of {\sc Greedy} 
for every capacity value $W_1=0,1,\ldots, c$ in $O(n\cdot c)$ time 
(after sorting the items of $F$) and store the outcome in an array $F(W_1)$.
Now it is easy to determine the value $\bar c-w'$ in constant time 
whenever we detect a new reachable value $W_1$ for some candidate $w'$ and 
check for a new objective function value. \end{proof}

The existence of a pseudopolynomial algorithm which is just linear in the 
capacity $c$ would justify an even stronger hope for positive approximation 
results than for Objective-pricing. In fact, many discrete optimization 
problems with a linear pseudopolynomial algorithm permit even an FPTAS by 
rounding the objective value space. 
However, the following theorem shows that those expectations cannot be fulfilled.

\begin{theorem}\label{th:constraintNoAPX} 
Unless $\mathcal{P}=\nnp$, the Constraint-control Leader's 
Problem \eqref{eq:RevisedConstrPbLeader} does not admit a constant 
approximation ratio.
\end{theorem}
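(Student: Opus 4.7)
The plan is to reduce from {\sc Partition} in the spirit of Theorem~\ref{th:objectiveNoAPX}, but with a gap-amplification trick to compensate for the constraint-control gain structure: the leader extracts only $\bar c - w'$ from a residual capacity $\bar c$, so a naive adaptation of the previous reduction yields only an additive gap of $O(1)$ and does not rule out constant-factor approximation.

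Given a {\sc Partition} instance $a_1,\dots,a_m$ with $\sum_{i=1}^{m}a_i=2b$ and a parameter $k$ (to be tuned to the desired ratio), I would construct the Constraint-control instance as follows. The L-items are $w_j:=2^k a_j$ for $j=1,\dots,m$ together with one auxiliary item $w_{m+1}:=1$. The F-items are one ``big'' item of weight $M+1$ with $M:=2^k(b+1)+1$, accompanied by $2^k b-1$ unit-weight ``filler'' F-items. The capacity is $c:=M+2^k b$. Two structural observations drive the analysis: (a) every achievable $W_1$ is congruent to $0$ or $1$ modulo~$2^k$, because all scaled L-items are multiples of~$2^k$ and only item $m+1$ contributes the extra~$1$; and (b) the unit fillers, which have efficiency~$1$ and follow the big F-item in the greedy tie-breaking, are chosen so that whenever the big F-item is packed the residual is consumed entirely and $\bar c=0$. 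A short case analysis then shows that $\bar c>0$ only for $W_1\in[2^k b,\,M+1]$, where $\bar c=M+1-W_1$.

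If {\sc Partition} is a YES-instance, \ll{} places the subset summing to~$b$ (scaled by $2^k$) before the F-items, achieving $W_1=2^k b$ and $\bar c=2^k+2$; item $m+1$ is still free and is used as the stretched item, giving gain $2^k+1$. If it is a NO-instance, observation~(a) rules out every $W_1\in\{2^k b,\,2^k b+1,\dots,2^k(b+1)-1\}$, so the smallest useful $W_1$ is $2^k(b+1)$, leaving $\bar c\le 2$ and therefore gain at most~$1$. Thus YES-instances have leader optimum at least $2^k+1$, NO-instances at most~$1$; choosing $k:=\lceil\log_2\rho\rceil$ turns any polynomial-time $\rho$-approximation into a polynomial-time decision procedure for {\sc Partition}, giving the claim unless $\mathcal{P}=\nnp$.

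The technical heart---and the main obstacle---is the combinatorial case analysis needed to justify observation~(b) over all relevant $W_1$ and to confirm that no NO-strategy evades the bound: in particular, one must verify that forcing item $m+1$ inside $S_1$ in order to reach a $W_1$ of residue $1$ modulo~$2^k$ within the target window would still require an $a$-subset summing exactly to~$b$, and that otherwise the stretch item satisfies $w'\ge 2^k a_{\min}$, making the gain at most $\bar c-2^k a_{\min}<0$ once $\bar c\le 2$.
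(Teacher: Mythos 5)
Your reduction is not a polynomial-time reduction, and this is fatal as written. You populate $F$ with $2^k b-1$ unit-weight filler items, but $b=\frac12\sum_i a_i$ is in general exponential in the encoding length of the {\sc Partition} instance (the $a_i$ are given in binary, and {\sc Partition} is only weakly $\mathcal{NP}$-hard, so you cannot restrict attention to polynomially bounded $b$). Hence the constructed instance has pseudopolynomially many items. Worse, in that instance the capacity is $c=M+2^k b=O(n)$ with $k$ a constant depending only on $\rho$, so the paper's own $O(n^{\sfrac 3 2}c)$ dynamic program (Theorem~\ref{th:constraintpseudopol}) solves it \emph{exactly} in polynomial time; no hardness, and in particular no inapproximability, can be derived from such instances. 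The surrounding gap analysis (residues mod $2^k$, the window $[2^k b, M+1]$, the choice $k=\lceil\log_2\rho\rceil$) is essentially workable, but it rests on a construction whose size blow-up destroys the argument.

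The paper's proof avoids precisely this trap: instead of unit fillers, $F$ consists of geometrically growing weights $2,4,\ldots,2^{k-1}$ together with two corrective items of weights $3$ and $2^k+2$, so only $O(\log (m\, a_{\max}))$ follower items are needed, yet $F$ can still exhaust every residual capacity except the single value $2^k+3$, which the leader can enforce iff {\sc Partition} is a YES-instance; this also drives the NO-case optimum to $0$, so no ratio-dependent amplification in $k$ is required at all. If you want to salvage your variant, replace the unit fillers by such a power-of-two family and re-verify your observation (b) under {\sc Greedy}'s decreasing-weight tie-breaking. One further point you should make explicit (it is glossed over in your observation (a)): the leader is not restricted to shaving a marginal $\eps$ off the weights placed before the $\mathcal F$-items; it may reduce them substantially and thereby reach a $W_1$ outside your residue classes. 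The argument survives because any such reduction is paid for one-to-one in the objective and cancels the extra residual gained, but this has to be checked rather than excluded by the congruence claim.
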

\begin{proof}
Consider again an instance $I$ of {\sc Partition},
where $m$ integer numbers $a_1,  \ldots, a_m$ are given
and the question is whether there exists a subset of these 
numbers with total sum equal to $b = \frac 1 2 \sum_{i=1}^m a_i$.

In the following we will construct an instance of the Constraint-pricing model such that \ll{} receives an objective function value of (almost) $1$ if $I$ is a YES-instance and $0$ otherwise.
Thus, any polynomial algorithm guaranteeing a constant approximation ratio for the Constraint-pricing Leader's problem would also answer the decision problem for $I$ in polynomial time.

Item set $L$ consists of $m$ items with $w_j = 2 a_j$ and another item $w_{m+1}=\eps$.
Let $k:= \min\{j : 2^j > \sum_{i=1}^{m+1} w_i \}$ and add
one additional item $w_{m+2} = 2^k - 2b$. 
Thus, if $I$ is a YES-instance then there is also a subset of $L$
with total weight $2^k$. 
On the other hand, if $I$ is a NO-instance then no subset of $L$ can have a weight of $2^k$ 
(consider that $w_{m+2}$ must be necessarily included in such a subset since $\sum_{i=1}^{m+1} w_i < 2^k$).

Item weights in $F$ will be denoted for convenience by $v_i$
and $F$ consists of $k-1$ items with weights 
$v_i=2^i$ for $i=1,\ldots, k-1$.
Clearly, the subsets of these items can reach every even weight value between $2$ and $2^k-2$.
Then we add an item $v_k=3$ which means that now every weight value (even and odd) between $2$ and $2^k+1$ can be reached.
Finally, we add a large item $v_{k+1}=2^k+2$.
This means that there are subsets of $F$ with total weight for every value from $2$ to $2^{k+1}+3$ except for the missing value $2^k+3$
which is larger than $\sum_{i=1}^k v_i$ and larger than $v_{k+1}$ but smaller than $v_{k+1}+v_1$.

Now we choose $c=2^{k+1}+3$.
If $I$ is a YES-instance, we argued above that \ll{} can submit items before the {\ff}-items with total weight $2^k$ which leaves a residual capacity of $2^k+3$ for \ff. 
As shown above, there is no way for \ff{} to fill this capacity completely but a residual capacity of $\bar c=1$ will remain which \ll{} can fill with item $w_{m+1}$ thus gaining an objective function value of $1-\eps$.

If $I$ is a NO-instance, then \ll{} can place some subset of items before the {\ff}-items with total weight different from $2^k$ leaving a gap different from $2^k+3$. 
\ff{} can fill any such gap completely with its items leaving an objective value of $0$ for \ll.
(In fact \ff{} could not fill a gap of only $1$,
but \ll{} can not enforce a gap of $1$ since its sum of weights is $< 2^{k+1}$.)

Observe that the above reduction is polynomial in the length of the encoded input since $\sum_{i=1}^{m+1} w_i \leq 2m \cdot a_{\max}$ and thus $k$ can be expressed as a logarithm of $m \cdot a_{\max}$. 


\end{proof}

\subsection{A simple variant}\label{sec:variantconstraintmodel}

A variant of the Constraint-pricing model in which the leader's objective is   
given by the actual weight values of its items can also be addressed. 
As we see below, this problem turns out to be easy to solve.
The bilevel program associated with the leader problem is the following, 
with the usual variables:

\begin{align}
\begin{aligned}
\max & \sum_{j\in L} \tw_j  x_j \label{eq:bilevel_1_constraint_1} & \\
\mbox{s.t. } & x \in \arg\max_{{y \in \{0,1\}^{n}}} 
\left \{ \sum_{j\in N} w_j y_{j} \ : \ 
\sum_{j\in L} \tw_j y_j  + \sum_{i \in F} w_i y_{i} \leq c \right\} \\
& x \in \{0,1\}^{n};\ w\in\Re^{|L|}
\end{aligned}
\end{align}

As we observed above, if \ll{}  decreases an item weight, i.e.\ $\tw_j < w_j$, 
then $j$ is positioned before all {\ff}-items since it has an efficiency  larger than $1$.
In this case, reducing the weights by a marginal quantity gives \ll{} control 
over the items considered by \ff{} but with a minimal loss in the leader objective.
So, by setting $\tw_j := w_j - \eps$, and thus incurring a marginal loss for \ff, 
item $j$---if the capacity allows it---will be included in the solution.
Clearly, a \ll-item $j$ is positioned after all {\ff}-items if $\tw_j > w_j$.

This allows a very simple solution strategy: {\ll} positions all its items before 
{\ff} in arbitrary order. As soon as an item can not be packed, {\ll} reduces its 
weight to match exactly the remaining capacity. Thus, neglecting the marginal loss 
due to the $\eps$, {\ll} is guaranteed a best possible objective function value 
$\min\{c, \sum_{j\in L} w_j\}$. 

\section{LP relaxation}\label{sec:LPrelax}

We now consider the special case in which \ff{}  faces a continuous (linear 
relaxation) version of the above problems \eqref{eq:RevisedObjPbLeader} and
\eqref{eq:RevisedConstrPbLeader} so that variables $x$ and $y$ are continuous
and contained in the interval $[0,1]$. \ff{} is therefore able to optimally 
solve the follower problem in polynomial
(in fact linear) time.
We now show that in this case both problems are trivial. 

Let us first consider the {Objective-control} model \eqref{eq:RevisedObjPbLeader}. 
Since \ff{} can split items to be included in the solution set, 
if $\sum_{j \in F} w_j \geq c$ then the follower can fill the entire knapsack capacity 
with its own items (or some of \ll{} with $\tw_j = w_j$)
and \ll{} cannot gain anything. 
%
However, if $w(F) < c$, 
i.e., the capacity exceeds the total weight of the {\ff}-items,
then \ll{} can set $\tw_j := 0$ for all $j\in L$, and pack 
as much as possible of its items in the residual capacity.
In conclusion, the optimal solution value for \ll{} is always given by 
$\min\left\{0, c-w(F)\right\}$.

Now let us consider the Constraint-control model \eqref{eq:RevisedConstrPbLeader}.
If $\sum_{j \in F} w_j \geq c$ then the follower will again fill the entire knapsack capacity with its own items.
\ll{} could only set $\tw_j := w_j - \eps$ to have an item included in the knapsack before the \ff{}-items, which would yield a negative contribution.
Thus, \ll{} remains at value $0$.
If $\sum_{j \in F} w_j < c$ then 
\ll{} can pick an arbitrary item $j' \in L$,
preferably the item with minimal weight, 
and set $\tw_{j'}:=M$.
All other items $j\in L$ receive values $\tw_j$ yielding an even lower efficiency.
Now after packing all items in $F$, \ff{} will choose a fractional part of $j'$ by setting $y_{j'}=\frac{c-w(F)}{M}$ to fill the knapsack completely.
The gain for \ll{} is $(M-w_{j'})\frac{c-w(F)}{M}$ which tends to
$c-w(F)$ for $M\to\infty$.
Hence, the optimal solution value for \ll{} is again 
$\min\left\{0, c-w(F)\right\}$.

\smallskip
For variant  \eqref{eq:bilevel_1_constraint_1} of the Constraint-pricing model, 
in which the leader is interested in maximizing $\sum_{j\in L} \tw_j$, 
the continuous model works exactly in the same way as the discrete case treated in Section~\ref{sec:variantconstraintmodel}.
Only for the last item of \ll{} which cannot be packed completely,
the continuous model does not require any special weight selection
but guarantees the objective function value $\min\{c, \sum_{j\in L} w_j\}$
by default (disregarding $\eps$).

\section{Conclusions}\label{sec:concl}

In this paper we analyzed the complexity of a Stackelberg  game for a 
Subset Sum pricing problem. 
We considered two variants of the problem 
in which the leader may revise the items weight in the follower objective or in the knapsack
(capacity) constraint and the follower selects the solution items set using the natural {\sc Greedy} algorithm heuristic. 
The objective function of the leader relates to the overall variation operated on the items weight. 
We showed that both versions of the problem are binary \np-hard but can be solved by dynamic programming in pseudopolynomial time.
Even though, both versions turn out to be non approximable within a constant factor.
We also characterize some easy cases and show that the continuous 
relaxation versions of the problems permit straightforward solution procedures.


There are a number of  open questions directly addressing the results presented here. 
In particular, a natural generalization of the addressed problem involves
the binary knapsack problem, so that items are characterized by their 
weight and profit parameters.
We also assumed that \ff{} adopts a computationally-bounded strategy.
Removing this 
constraint and considering a follower able to optimally solve the resulting
optimization problem makes the leader strategy much harder to be determined.

\subsection*{Acknowledgements}

 %

  Ulrich Pferschy was supported by the project ``Choice-Selection-Decision" and by the COLIBRI Initiative of the University of Graz.


\begin{thebibliography}{99}

\bibitem{bib:bhgv12} Briest, P., Hoefer, M.,    Gual\`a, L.,  Ventre, C.: 
On Stackelberg Pricing with Computationally Bounded Consumers, Networks {\bf 60}(1)  31--44 (2012)

 
\bibitem{capr14}
Caprara,  A., Carvalho, M.,  Lodi, A.,  Woeginger, G.J.:
A study on the computational complexity of the bilevel knapsack problem,
SIAM Journal on Optimization {\bf 24}, 823--838 (2014)

\bibitem{bib:cms2015} Chakravarty, S.R., Mitra, M., Sarkar, P.: A Course on Cooperative Game Theory,
Cambridge University Press (2015)

\bibitem{bib:dnps14}
Darmann A.,  Nicosia, G., Pferschy, U., Schauer J.:
The subset sum game,
{European Journal of Operational Research}, {\bf 233}(3),  539--549 (2014)

\bibitem{bib:fj2013} Fearnley, J., Jurdziński, M.:
Reachability in Two-Clock Timed Automata Is PSPACE-Complete,
Information and Computation, \textbf{243}, 26--36 (2015)

\bibitem{bib:gj79} Garey, M. R., Johnson D. S.: 
Computers and intractability: a guide to the theory of NP-completeness, W. H. Freeman (1979)

\bibitem{bib:hjs92} Hansen, P., Jaumard, B.,  Savard, G.:
A new branch-and-bound rules for linear bilevel programming,
SIAM Journal on Scientific and Statistical Computing, {\bf 5}(13), 1194--1217 (1992)

\bibitem{bib:j85}
Jeroslow, R. G.: The polynomial hierarchy and a simple model for competitive analysis,
Mathematical Programming, {\bf 32}(2), 146--164 (1985)

\bibitem{kpp04}  Kellerer, H., Pferschy, U., Pisinger, D.:
{Knapsack Problems}, Springer, (2004)

\bibitem{bib:lv16} Labb\'e, M., Violin, A.: 
Bilevel programming and price setting problems, 
Annals of Operations Research, {\bf 240}, 141--169 (2016)

\bibitem{bib:npp2017}  Nicosia, G.,  Pacifici, A., Pferschy, U.: 
Price of Fairness for allocating a bounded resource, 
{European Journal of Operational Research},  \textbf{257}, 933--943 (2017)

\bibitem{bib:sagt2015}
Nicosia, G.,  Pacifici, A.,  Pferschy, U.:
Brief announcement: On the fair subset sum problem,
Proceedings of SAGT 2015,
{Springer Lecture Notes in Computer Science}, 9347, 309--311  (2015)

\bibitem{Stackelberg} H. von Stackelberg. Marktform und Gleichgewicht (Market and Equilibrium). Verlag
von Julius Springer (1934)

\end{thebibliography}
\end{document}